\documentclass[11pt]{amsart}
\baselineskip=7.0mm
\usepackage{amsmath}
\usepackage[dvips]{epsfig}
\setlength{\baselineskip}{1.09\baselineskip}

\usepackage{slashed}
\usepackage{accents}
\usepackage{color}
\usepackage{leftidx}
\usepackage{verbatim}

\theoremstyle{plain}
\newtheorem{theorem}{Theorem}%[section]

\newtheorem{lemma}[theorem]{Lemma}

\setlength{\textwidth}{6.6in} \setlength{\textheight}{8.6in}
\hoffset=-0.83truein
\voffset=-0.1truein

%\setlength{\topmargin}{0.6in} \setlength{\oddsidemargin}{0.3in}
%\setlength{\evensidemargin}{0.3in} \setlength{\textwidth}{6.0in}
%\setlength{\rightmargin}{0.7in} \setlength{\leftmargin}{-0.5in}
%\setlength{\textheight}{7.9in}

 %volume form of \bg%
  %volume form of  g%
  %barred Gamma- Christoffel symbol for barred g%
 %Christoffel symbol for lorentzian 4-metric%
 %barred nabla - covariant derivative w.r.t. barred g%
 %tilde nabla - covariant derivative w.r.t. tilde{g}-4metric%
  %barred g%
 %tilde g - Lorentzian 4-metric%
  %barred Y%
  % Y^\phi%
 %barred w- spacelike component of normal vector to t=f%
 %barred k - second fundamental form for t=0%
 %barred divergence - divergence with respect to barred g%
 %barred rho in Brill's coordinate%
 %barred rho in Brill's%
 %barred theta in Brill's%
 %barred U in Brill's coordinate %
 %barred z in Brill's coordinate%
 %barred alpha in Brill's coordinate%

\DeclareMathOperator{\ric}{Ric}
\DeclareMathOperator{\diam}{diam}
\DeclareMathOperator{\tr}{tr}

\begin{document}

\title[Nonexistence of Extremal de Sitter Black Rings] {Nonexistence of Extremal de Sitter Black Rings}

\author[Khuri]{Marcus Khuri}
\address{Department of Mathematics\\
Stony Brook University\\
Stony Brook, NY 11794, USA}
\email{khuri@math.sunysb.edu}

\author[Woolgar]{Eric Woolgar}
\address{Department of Mathematical and Statistical Sciences\\
University of Alberta\\
Edmonton, AB, Canada T6G 2G1}
\email{ewoolgar@ualberta.ca}

\thanks{M. Khuri acknowledges the support of NSF Grant DMS-1708798. E. Woolgar was supported by a Discovery Grant RGPIN 203614 from the Natural Sciences and Engineering Research Council.}

\begin{abstract}
We show that near-horizon geometries in the presence of a positive cosmological constant cannot exist with ring topology. In particular, de Sitter black rings with vanishing surface gravity do not exist. Our result relies on a known mathematical theorem which is a straightforward consequence of a type of energy condition for a modified Ricci tensor, similar to the curvature-dimension conditions for the $m$-Bakry-\'Emery-Ricci tensor.
\end{abstract}
\maketitle

%\section{Introduction}
%\label{sec1}
\setcounter{equation}{0}
\setcounter{section}{1}
%\counterwithout{equation}{section}

\noindent The discovery of the Emparan-Reall singly spinning black ring \cite{EmparanReall} and the
Pomeransky-Sen'kov doubly spinning black ring \cite{PomeranskySenkov} have played an important role in the theory of higher dimensional black holes. A basic open question has been whether such
solutions may be generalized to the cosmological setting? There has been relatively little progress in constructing such solutions. One potential reason for this is that dimensional reduction of the Einstein equations with nonzero cosmological constant $\Lambda\neq 0$ fails to yield a sigma model structure \cite{HollandsIshibashi}, in contrast to the $\Lambda=0$ case. Thus, standard solution generating techniques do not apply in this context. In fact, it turns out that in the extremal case these solutions cannot exist for topological reasons.

In order to state the main result we first introduce a bit of terminology. Consider the Einstein equations
\begin{equation}
\label{eq1}
R_{\mu\nu}=\Lambda g_{\mu\nu}+T_{\mu\nu}-\frac{1}{n}g^{\rho\sigma}T_{\rho\sigma}g_{\mu\nu},
\end{equation}
%%EW: I think $\lambda$ here is what is usually denoted as $\frac{2}{(n-2)}\Lambda$.
where $T$ is the energy-momentum tensor and $n=D-2$, with $D$ denoting the dimension of the spacetime. A \textit{de Sitter black ring} is taken to mean a regular spacetime satisfying the Einstein equations with $\Lambda >0$, and having a Killing horizon with cross-section topology $S^1\times \Sigma$ where $\Sigma$ is an arbitrary compact manifold of dimension $D-3$. Recall that a Killing horizon is a null hypersurface defined by the vanishing in norm of a Killing field $V$, which is normal to the horizon. These come naturally equipped with a notion of surface gravity $\kappa$ defined through the equation
\begin{equation}
\label{eq2}
\nabla_{V}V=\kappa V
\end{equation}
on the horizon.

\begin{theorem}\label{thm1}
There do not exist de Sitter black rings with zero surface gravity and matter fields satisfying the energy condition \eqref{energy}. In particular this conclusion holds in vacuum.
\end{theorem}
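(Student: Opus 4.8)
The plan is to pass to the near-horizon geometry of the (necessarily degenerate) horizon and to show that its compact cross-section, which must inherit the ring topology $S^{1}\times\Sigma$, cannot carry the curvature lower bound that a positive cosmological constant forces on a modified Ricci tensor.

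First I would use that zero surface gravity means the Killing horizon is degenerate. In Gaussian null coordinates $(v,r,x^{a})$ adapted to the normal Killing field $V=\partial_{v}$, with the horizon at $\{r=0\}$, I would apply the scaling $v\mapsto v/\epsilon$, $r\mapsto\epsilon r$ and let $\epsilon\to 0$; degeneracy of the horizon is precisely the hypothesis under which this limit exists and the resulting near-horizon metric is determined by data $(h_{ab},\omega_{a},F)$ on the $(D-2)$-dimensional cross-section $H$. Since this scaling limit does not change topology, $H$ is a closed manifold diffeomorphic to the black ring cross-section $S^{1}\times\Sigma$, and $h_{ab}$ is a Riemannian metric on it.

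Next I would substitute the near-horizon metric into \eqref{eq1}. The tangential $(a,b)$ components reduce to an equation on $H$ of the form
\[
R_{ab}=\tfrac12\,\omega_{a}\omega_{b}-\nabla_{(a}\omega_{b)}+\Lambda\,h_{ab}+\mathcal{T}_{ab},
\]
where $R_{ab}$ and $\nabla$ are the Ricci curvature and Levi-Civita connection of $h$, and $\mathcal{T}_{ab}$ is the restriction to $H$ of the matter terms $T_{\mu\nu}-\tfrac1n g^{\rho\sigma}T_{\rho\sigma}g_{\mu\nu}$. Rearranging,
\[
R_{ab}+\nabla_{(a}\omega_{b)}-\tfrac12\,\omega_{a}\omega_{b}=\Lambda\,h_{ab}+\mathcal{T}_{ab},
\]
and the left-hand side is precisely the $m$-Bakry-\'Emery-type modified Ricci tensor of $(H,h,\omega)$ with $m=2$; it coincides with the ordinary $2$-Bakry-\'Emery-Ricci tensor of $h$ whenever $\omega$ is exact. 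Using the energy condition \eqref{energy} to get $\mathcal{T}_{ab}\geq 0$ as a quadratic form, I would conclude
\[
R_{ab}+\nabla_{(a}\omega_{b)}-\tfrac12\,\omega_{a}\omega_{b}\;\geq\;\Lambda\,h_{ab}\;>\;0 .
\]

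I would then apply the cited mathematical theorem: a closed Riemannian manifold carrying a one-form whose associated finite-$m$ Bakry-\'Emery-type modified Ricci tensor is bounded below by a positive multiple of the metric must have finite fundamental group, this being a Myers/Bochner-type consequence of the corresponding curvature-dimension inequality that survives the passage from gradients to general one-forms. This contradicts $\pi_{1}(S^{1}\times\Sigma)$ being infinite, as it contains an infinite cyclic subgroup coming from the $S^{1}$ factor. Hence no degenerate de Sitter black ring exists, and the vacuum statement follows by taking $T_{\mu\nu}\equiv 0$, which trivially satisfies \eqref{energy}. I expect the real work --- essentially the only part not already packaged in the curvature-dimension theorem --- to be the near-horizon reduction itself: fixing the sign and coefficient of the $\omega\otimes\omega$ and $\nabla\omega$ terms, and reading off from \eqref{eq1} the precise form of \eqref{energy} under which $\mathcal{T}_{ab}$ is nonnegative, so that the modified Ricci inequality emerges with a finite effective dimension and a strictly positive bound.
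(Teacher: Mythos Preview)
Your proposal is correct and follows essentially the same route as the paper: pass to the near-horizon geometry of the degenerate horizon, recognize the tangential near-horizon equation as the $m=2$ modified-Ricci (quasi-Einstein) condition with the energy condition \eqref{energy} providing $P_{ab}\geq 0$ so that $\ric_{h}^{2}\geq\Lambda\gamma>0$, then invoke Limoncu's Myers-type theorem (the paper's Theorem~\ref{thm2}) on the compact cross-section to force finite $\pi_{1}$ and contradict $\pi_{1}(S^{1}\times\Sigma)\supset\mathbb{Z}$. The only cosmetic difference is that the paper phrases the last step as ``pull back to the universal cover and bound its diameter,'' whereas you state the finite-$\pi_{1}$ conclusion directly; these are equivalent.
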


It should be noted that this theorem does not require any symmetry hypotheses beyond the Killing horizon assumption. Previous nonexistence results have been established in \cite{GroverGutowskiSabra,KunduriLuciettiReall}, in which regular supersymmetric anti-de Sitter black rings have been found not to exist in 5D minimal gauged supergravity. In addition, a general physical argument against the existence of supersymmetric AdS black rings may be found in \cite{CaldarelliEmparanRodriguez}. Since supersymmetry is not compatible with $\Lambda>0$, these results imply nonexistence of such ring solutions in the presence of a nonvanishing cosmological constant. On the other hand, there is strong evidence for the existence of non-extremal (A) dS black rings, and in fact a perturbative construction of them has been given \cite{CaldarelliEmparanRodriguez}.

The proof of Theorem \ref{thm1} is based on what we view as an important unrecognized relationship between near-horizon geometries and the mathematical notion of $m$-quasi-Einstein metrics. When a degenerate ($\kappa=0$) Killing horizon is present, Gaussian
null coordinates \cite{KunduriLucietti} may be introduced in a neighborhood of the horizon so that the spacetime metric takes the form
\begin{equation}
\label{eq3}
g = 2 dv \left(dr +\frac{1}{2}r^2 F(r,x) dv +rh_a(r,x) dx^a\right)
 + \gamma_{ab}(r,x) dx^a dx^b,
\end{equation}
where $V=\partial_{v}$, the horizon is located at $r=0$, and $\gamma$ represents the
induced metric on the horizon cross-section $\mathcal{H}$. By taking the near-horizon limit
$v\rightarrow \frac{v}{\varepsilon}$,
$r\rightarrow \varepsilon r$, and
$\varepsilon\rightarrow 0$,
we obtain the near-horizon geometry
\begin{equation}
\label{eq4}
g_{NH} = 2 dv \left(dr +\frac{1}{2}r^2 F(x) dv +rh_a(x) dx^a\right)
 + \gamma_{ab}(x) dx^a dx^b
\end{equation}
which is determined by the near-horizon data $(\gamma_{ab},h_a, F)$
living on $\mathcal{H}$. These must satisfy the near-horizon geometry equations
\begin{equation}
\label{eq5}
\begin{split}
R_{ab}=&\, \frac{1}{2}h_a h_b -\nabla_{(a}h_{b)}+\Lambda\gamma_{ab}+P_{ab},\\
F=&\, \frac{1}{2}|h|^2-\frac{1}{2}\nabla_{a}h^a+\Lambda +E,
\end{split}
\end{equation}
where
\begin{equation}
\label{eq6}
\begin{split}
P_{ab}=&\, T_{ab} -\frac{1}{n}\left ( \tr_{g_{NH}} T_{NH}\right ) \gamma_{ab}
= T_{ab}-\frac{1}{n}\left(\gamma^{cd}T_{cd}+2T_{+-}\right)\gamma_{ab},\\
E=&\,-\left(\frac{n-2}{n}\right)T_{+-}+\frac{1}{n}\gamma^{ab}T_{ab}.
\end{split}
\end{equation}
Here $T_{+-}$ denotes the $v-r$ component in the near-horizon limit of the energy-momentum tensor, parameterized in \cite{KunduriLucietti} as
\begin{equation}
\label{eq7}
T_{NH}=2T_{+-}dr dv +2r\left ( \beta_a+T_{+-}h_a\right ) dx^a dv +r^2\left ( \alpha + T_{+-}F\right )dv^2+T_{ab}dx^adx^b\ .
\end{equation}
%\begin{equation}
%\label{eq7}
%e_{+}=dv,\quad\quad\quad\quad e_{-}=dr+rh_a dx^a+\frac{1}{2}r^2 Fdv.
%\end{equation}
The relevant energy condition states that the symmetric matrix $P$ is nonnegative definite, that is
\begin{equation}\label{energy}%\label{eq8}
P\geq 0.
\end{equation}

On the other hand, the \emph{modified Ricci tensor} of \cite{Limoncu} on $\mathcal{H}$ is defined by
\begin{equation}
\label{eq9}
\ric_{X}^{m}=\ric+\frac{1}{2}L_{X}\gamma-\frac{1}{m}X\otimes X,
\end{equation}
where $X$ is a vector field/1-form (we use the same notation for both) on $\mathcal{H}$ and $L$ denotes Lie differentiation. We note that in the special case where $X=\nabla f$, the modified Ricci tensor becomes the $m$-Bakry-\'Emery-Ricci tensor $\ric_m^f$ (or sometimes $\ric_{n+m}^f$, depending on convention); here however, we do not assume $X$ to be a gradient vector field. We will refer to the metric $\gamma$ as an \emph{$m$-quasi-Einstein metric}\footnote
{In contrast to our usage, some authors reserve this term for the $X=\nabla f$ case.}
if there exists $X$ and a constant $\lambda$ such that
\begin{equation}
\label{eq10}
\ric_{X}^m=\lambda \gamma.
\end{equation}
Thus we find that by setting $X=h$ and $m=2$, a vacuum near-horizon geometry defines an
$m$-quasi-Einstein metric.

It turns out that the theory of $m$-quasi-Einstein metrics parallels that of Riemannian
geometry. By this we mean that many standard results of Riemannian geometry have extensions to the setting of $m$-quasi-Einstein metrics. One of the most important is that of Myers theorem, which asserts that a positive lower bound for the Ricci curvature of a complete manifold implies a corresponding upper bound for the diameter \cite{Petersen}.
The version of this theorem for $m$-quasi-Einstein metrics \cite{Limoncu} is what we will use to prove Theorem \ref{thm1}.

\begin{theorem}[{\cite[Theorem 1.2]{Limoncu}}]\label{thm2}
Let $(\mathcal{H},\gamma)$ be a complete Riemannian $n$-manifold satisfying
\begin{equation*}
\ric_{X}^m\geq\lambda\gamma>0,\quad\quad\quad |X|\leq C\ .
\end{equation*}
Then
\begin{equation*}
\diam(\mathcal{H})\leq \frac{\pi}{\lambda}\left ( \frac{C}{\sqrt{2}}+\sqrt{ \frac{C^2}{2}+(n-1)\lambda}\right )  .
\end{equation*}
%%MK: Rigidity statement?
\end{theorem}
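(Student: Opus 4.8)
\medskip

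\noindent\textit{Proof strategy.} The plan is to adapt the variational proof of Myers' theorem, absorbing the extra first-order term $\tfrac12 L_X\gamma$ in the definition \eqref{eq9} by an integration by parts along a geodesic and discarding the favorably-signed zeroth-order term $-\tfrac1m X\otimes X$. By completeness any two points of $\mathcal H$ are joined by a minimizing geodesic, so it is enough to bound the length $\ell$ of an arbitrary unit-speed minimizing geodesic $c\colon[0,\ell]\to\mathcal H$. I would fix a parallel orthonormal frame $E_1,\dots,E_{n-1}$ along $c$, each orthogonal to $\dot c$, set $f(t)=\sin(\pi t/\ell)$, and use the variation fields $W_i=fE_i$, which vanish at the endpoints. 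Since $c$ is minimizing, each index form $I(W_i,W_i)=\int_0^\ell\!\bigl((f')^2-f^2\langle R(E_i,\dot c)\dot c,E_i\rangle\bigr)\,dt$ is nonnegative; summing over $i$ and using $\sum_i\langle R(E_i,\dot c)\dot c,E_i\rangle=\ric(\dot c,\dot c)$ gives
\begin{equation*}
\int_0^\ell f^2\,\ric(\dot c,\dot c)\,dt\ \le\ (n-1)\int_0^\ell (f')^2\,dt .
\end{equation*}

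Next I would bring in the hypothesis. Writing $\psi(t)=\gamma\bigl(X_{c(t)},\dot c(t)\bigr)$, so that $|\psi|\le|X|\le C$, one has $\tfrac12(L_X\gamma)(\dot c,\dot c)=\gamma(\nabla_{\dot c}X,\dot c)=\psi'(t)$ along $c$; hence $\ric_X^m\ge\lambda\gamma$ forces $\ric(\dot c,\dot c)\ge\lambda-\psi'+\tfrac1m\psi^2\ge\lambda-\psi'$. Substituting this and integrating by parts the resulting term, $-\int_0^\ell f^2\psi'\,dt=\int_0^\ell 2ff'\psi\,dt$ (the boundary terms vanish because $f(0)=f(\ell)=0$), the inequality becomes
\begin{equation*}
\lambda\int_0^\ell f^2\,dt\ \le\ (n-1)\int_0^\ell (f')^2\,dt\ +\ 2\int_0^\ell |f||f'||\psi|\,dt .
\end{equation*}

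The hard part will be the last integral, which is the cost of allowing a nonzero drift $X$; it must be absorbed using only the pointwise bound $|\psi|\le C$. I would estimate $2|f||f'||\psi|\le \varepsilon^{-1}C^2 f^2+\varepsilon(f')^2$ by Young's inequality with a free parameter $\varepsilon>0$, use the elementary identities $\int_0^\ell f^2\,dt=\ell/2$ and $\int_0^\ell (f')^2\,dt=\pi^2/(2\ell)$ to reduce everything to a quadratic inequality in $\ell$ whose coefficients depend on $\varepsilon$, and then optimize over $\varepsilon$; this yields a finite upper bound for $\ell$ of the type stated, and since $c$ was arbitrary this bounds $\diam(\mathcal H)$. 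The only real work is keeping the optimization sharp enough to reach the precise constant displayed; a cruder estimate (e.g.\ bounding $\int_0^\ell|ff'|\,dt$ outright) already gives finiteness of the diameter, which is all that Theorem~\ref{thm1} requires.

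Two remarks on the plan. First, $X$ is not assumed to be a gradient, so one cannot set $X=\nabla f$ and quote a known Bakry--\'Emery comparison theorem; but the geodesic computation only involves $\psi$ and $\psi'$ along $c$, which do not feel whether $X$ is exact, so this causes no trouble. Second, if the variational bookkeeping became cumbersome I would instead run a Riccati comparison for the drift Laplacian $\Delta_X r=\Delta r-\gamma(X,\nabla r)$ along radial geodesics, reaching the same conclusion; and keeping the term $\tfrac1m\psi^2$ discarded above and completing the square $2ff'\psi+\tfrac1m f^2\psi^2\ge -m(f')^2$ instead of invoking $|X|\le C$ leads to the cleaner, $C$-independent bound $\diam(\mathcal H)\le\pi\sqrt{(n-1+m)/\lambda}$, which would already suffice for Theorem~\ref{thm1}.
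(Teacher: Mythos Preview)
The paper does not actually prove this theorem; it simply quotes it from \cite{Limoncu} and remarks that the proof there ``is itself brief and direct, and uses only familiar methods in geodesic geometry.'' Your proposal is exactly that argument: the second-variation proof of Myers' theorem with sine test functions, with the Lie-derivative term handled via $\int f^{2}\psi'=-\int 2ff'\psi$ and then absorbed by Young's inequality to reach a quadratic in $\ell$; this is correct and is essentially Limoncu's approach, and your final remark---keeping $\tfrac{1}{m}\psi^{2}$ and completing the square to get the $C$-independent bound $\diam(\mathcal H)\le\pi\sqrt{(n-1+m)/\lambda}$---already suffices for the application to Theorem~\ref{thm1}.
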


Observe that the only additional hypothesis beyond those of the classical theorem
is the requirement that the vector field $X$ be uniformly bounded. On a compact manifold (the horizon cross-section) this is automatically satisfied. We are now in a position to establish Theorem \ref{thm1} using a standard topological consequence of this result.

\begin{proof}[Proof of Theorem \ref{thm1}] Suppose that a de Sitter black ring exists with zero surface gravity and matter fields satisfying the energy condition \eqref{energy}. The associated near-horizon geometry then satisfies the hypotheses of Theorem \ref{thm2} with $h=X$, $\lambda=\Lambda$, and $m=2$. By pulling back the relevant geometric quantities to the universal covering manifold, we find that the diameter of the universal cover must be finite. This implies that the base $\mathcal{H}$ must have finite fundamental group. However $\pi_{1}(\mathcal{H})=\pi_{1}(S^1\times\Sigma)=\mathbb{Z}\times \pi_{1}(\Sigma)$ which is infinite, a contradiction.
\end{proof}

This line of argument may be interpreted as follows in the realm of near-horizon geometries, and is of independent interest.

\begin{theorem}\label{theorem3}
There do not exist de Sitter near-horizon geometries of ring type with matter fields
satisfying the energy condition \eqref{energy}. In particular this conclusion holds in vacuum.
\end{theorem}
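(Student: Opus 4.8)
The plan is to run essentially the same argument used for Theorem~\ref{thm1}, but starting directly from the near-horizon data rather than from a black ring spacetime; since Theorem~\ref{thm1} derived its near-horizon geometry by the $\varepsilon \to 0$ limit anyway, Theorem~\ref{theorem3} is really the core statement and its proof is a mild streamlining. By ``ring type'' I take it that the data $(\gamma_{ab}, h_a, F)$ lives on a cross-section $\mathcal{H}$ diffeomorphic to $S^1 \times \Sigma$ with $\Sigma$ compact of dimension $D-3$, satisfies the near-horizon equations \eqref{eq5} with $\Lambda > 0$, and has matter term obeying \eqref{energy}. In particular $\mathcal{H}$ is compact, so $\gamma$ is complete and $h$ is bounded, say $|h|_\gamma \le C$.

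First I would rewrite the first line of \eqref{eq5} in terms of the modified Ricci tensor \eqref{eq9}. Since $(L_h\gamma)_{ab} = \nabla_a h_b + \nabla_b h_a = 2\nabla_{(a}h_{b)}$, taking $X = h$ and $m = 2$ in \eqref{eq9} gives
\begin{equation*}
\ric_h^2 = \ric + \tfrac{1}{2} L_h \gamma - \tfrac{1}{2}\, h \otimes h = \Lambda\gamma + P,
\end{equation*}
where the last equality is just \eqref{eq5} rearranged. The energy condition \eqref{energy}, together with $\Lambda > 0$, then yields $\ric_h^2 \ge \Lambda\gamma > 0$, so $\gamma$ is an $m$-quasi-Einstein-type metric fulfilling the hypotheses of Theorem~\ref{thm2} with $\lambda = \Lambda$ and the bound $C$ above.

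Next I would pass to the universal cover $\pi\colon \widetilde{\mathcal{H}} \to \mathcal{H}$ and pull back $\gamma$ and $h$. All quantities entering the inequality $\ric_h^2 \ge \Lambda\gamma$ and the bound $|h| \le C$ are local, so $(\widetilde{\mathcal{H}}, \pi^*\gamma)$ — which is complete, being the pullback of a complete metric along a covering — again satisfies $\ric^2_{\pi^*h} \ge \Lambda\, \pi^*\gamma > 0$ and $|\pi^*h| \le C$. Theorem~\ref{thm2} then forces $\diam(\widetilde{\mathcal{H}}) < \infty$, hence $\widetilde{\mathcal{H}}$ is compact, which makes the covering finite-sheeted and $\pi_1(\mathcal{H})$ finite. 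This contradicts $\pi_1(\mathcal{H}) = \pi_1(S^1 \times \Sigma) = \mathbb{Z} \times \pi_1(\Sigma)$, which is infinite.

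There is no new analytic difficulty here beyond what Theorem~\ref{thm2} supplies; the points requiring care are bookkeeping ones. One must check that \eqref{eq5} really does reorganize into the form \eqref{eq9}–\eqref{eq10} with the non-standard value $m = 2$ (so that the quadratic term $-\tfrac{1}{m}X\otimes X = -\tfrac12 h\otimes h$ exactly matches the $\tfrac12 h_a h_b$ in \eqref{eq5}), and that the energy condition \eqref{energy} is precisely what is needed to push the matter contribution $P$ to the favorable side of the curvature inequality. The passage to the universal cover, though routine, is the step that actually converts the diameter bound into a topological obstruction, and it is why compactness of $\mathcal{H}$ (equivalently, completeness of $\widetilde{\mathcal{H}}$) cannot be dropped from the ``ring type'' hypothesis.
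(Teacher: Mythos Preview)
Your proposal is correct and follows exactly the paper's approach: the paper does not give a separate proof of Theorem~\ref{theorem3} but simply remarks that the argument proving Theorem~\ref{thm1}---rewriting \eqref{eq5} as $\ric_h^2=\Lambda\gamma+P\ge\Lambda\gamma$, invoking Theorem~\ref{thm2} on the universal cover, and contradicting $\pi_1(S^1\times\Sigma)=\mathbb{Z}\times\pi_1(\Sigma)$---is already a statement about near-horizon geometries. Your write-up is in fact more explicit than the paper's, spelling out the Lie-derivative bookkeeping and the completeness/boundedness verifications that the paper leaves implicit.
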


Finally, we interpret the energy condition \eqref{energy} in terms of perfect fluids. Perfect fluids have energy-momentum tensors of the form
\begin{equation}
\label{eq11}
T = \left ( \rho + p \right ) u\otimes u +p g,
\end{equation}
where $\rho$ is the non-gravitational energy density, $p$ is the pressure, and $u$ is a unit timelike vector. Then $\tr_{g_{NH}} T_{NH}= -\rho+(D-1)p = -\rho + (n+1)p$ and we obtain
\begin{equation}\label{eq12}
P_{ab}= T_{ab}-\frac{1}{n}\left ( \tr_{g_{NH}} T_{NH}\right ) \gamma_{ab}
= (\rho +p) u_{a} u_{b} +\frac{1}{n}(\rho -p)\gamma_{ab},
\end{equation}
%where here $\gamma_a{}^b:=\gamma_{ac}g^{cb}$ annihilates components in directions %orthogonal to the horizon cross-section.
It follows that \eqref{energy} is assured to hold for all (timelike) $u$ if $\rho\ge |p|$. But $\rho\ge |p|$ is the \emph{dominant energy condition} for perfect fluids.

\begin{lemma}\label{lemma4}
If matter is described by a perfect fluid obeying the dominant energy condition, then the energy condition \eqref{energy} holds.
\end{lemma}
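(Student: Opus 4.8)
The plan is to verify \eqref{energy} by a direct computation, formalizing the chain of identities already anticipated in \eqref{eq11}--\eqref{eq12}. First I would recall from \eqref{eq6} that on the horizon cross-section $\mathcal{H}$ the matrix $P$ is $P_{ab}=T_{ab}-\frac{1}{n}\left(\tr_{g_{NH}}T_{NH}\right)\gamma_{ab}$, with $a,b$ running over the coordinates $x^a$ on $\mathcal{H}$ and $\gamma$ the (Riemannian) induced metric. Substituting the perfect fluid form \eqref{eq11}, using that $u$ is a unit timelike vector and that the spacetime dimension is $D=n+2$, the trace is computed invariantly:
\[
\tr_{g_{NH}}T_{NH}=(\rho+p)\,g^{\mu\nu}u_\mu u_\nu+p\,g^{\mu\nu}g_{\mu\nu}=-(\rho+p)+(n+2)p=-\rho+(n+1)p .
\]
This step is coordinate independent, so there is no need to track the separate $T_{+-}$ and $\gamma^{cd}T_{cd}$ contributions.

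Next I would restrict \eqref{eq11} to $\mathcal{H}$, giving $T_{ab}=(\rho+p)u_au_b+p\,\gamma_{ab}$, and combine with the trace to obtain
\[
P_{ab}=(\rho+p)u_au_b+p\,\gamma_{ab}+\frac{1}{n}\left(\rho-(n+1)p\right)\gamma_{ab}=(\rho+p)u_au_b+\frac{1}{n}(\rho-p)\gamma_{ab},
\]
which is exactly \eqref{eq12}. To see $P\geq 0$, evaluate this quadratic form on an arbitrary tangent vector $w\in T_q\mathcal{H}$ at each point $q$:
\[
P_{ab}w^aw^b=(\rho+p)\,(u_aw^a)^2+\frac{1}{n}(\rho-p)\,\gamma_{ab}w^aw^b .
\]
Since $\gamma$ is positive definite, $\gamma_{ab}w^aw^b\geq 0$, and trivially $(u_aw^a)^2\geq 0$. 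The dominant energy condition for a perfect fluid is precisely $\rho\geq|p|$, which forces both $\rho+p\geq 0$ and $\rho-p\geq 0$; hence every term is nonnegative and $P\geq 0$.

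There is essentially no hard step here: the content is a short algebraic verification. The only point deserving care is that \eqref{energy} is a pointwise condition that must hold at every point of $\mathcal{H}$ irrespective of how the fluid four-velocity $u$ is oriented relative to the cross-section. The computation above handles this automatically, because $(u_aw^a)^2$ and $\gamma(w,w)$ are manifestly nonnegative whatever the (possibly vanishing) spatial projection $u_a$ of $u$ happens to be; in particular no assumption that $u$ be transverse or tangent to $\mathcal{H}$, nor that its restriction to $\mathcal{H}$ have unit $\gamma$-length, is required.
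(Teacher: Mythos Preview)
Your argument is correct and follows the same route as the paper: compute the trace of the perfect-fluid stress tensor, substitute to obtain \eqref{eq12}, and then observe that $\rho\ge |p|$ makes both coefficients nonnegative. You have simply made the final nonnegativity check explicit by pairing $P$ with an arbitrary tangent vector, which is a harmless elaboration of what the paper leaves as ``it follows.''
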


We close by noting that, while the brevity of our argument arises in part from the fact that the proof of Theorem \ref{thm2} is already available in the literature, the proof of Theorem \ref{thm2} in \cite{Limoncu} is itself brief and direct, and uses only familiar methods in geodesic geometry. This to us suggests that the further exploitation of geodesic geometry of quasi-Einstein metrics may yield further results in horizon geometry with little fuss or effort.

\textbf{Acknowledgements.}
The authors would like to thank Roberto Emparan for insightful comments,
Greg Galloway for bringing reference \cite{Limoncu} to their attention, and Hari Kunduri for useful discussions. The authors thank the Erwin Schr\"odinger International Institute for Mathematics and Physics and the organizers of its ``Geometry and Relativity'' program, where this paper was conceived and written.

\end{document}